\def\bits{\{0,1\}}
\newcommand{\xor}{{\oplus}}
\newcommand{\heading}[1]{\vspace{5pt}\noindent\underline{\textsc{#1}}}
\newcommand{\noskipheading}[1]{\noindent\underline{\textsc{#1}}}
\newcommand{\Keys}{\mathrm{Keys}}
\newcommand{\Dom}{\mathrm{Dom}}
\newcommand{\secref}[1]{Section~\ref{#1}}
\newcommand{\figref}[1]{\textmd{Fig.}~\ref{#1}}
\renewcommand{\eqref}[1]{\mbox{Eq.~(\ref{#1})}}
\newcommand{\concat}{{\,\|\,}}
\newcommand{\vecM}{\bm{M}}
\newcommand{\flag}{\textsl{last}}
\newcommand{\IF}{\textbf{if }}
\newcommand{\ELSE}{\textbf{else }}
\newcommand{\THEN}{\textbf{then }}
\newcommand{\Header}{\textsl{Header}}
\newcommand{\sysrm}{\textrm{CryptMPI}}
\newtheorem{proposition}{Proposition}
\newcommand{\Enc}{\texttt{Enc}}
\newcommand{\Dec}{\texttt{Dec}}
\newcommand{\Gen}{\texttt{Gen}}
\newcommand{\calM}{{\mathcal M}}
\newcommand{\Tcomm}{T_{\mathrm{comm}}}
\newcommand{\acomm}{\alpha_{\mathrm{comm}}}
\newcommand{\bcomm}{\beta_{\mathrm{comm}}}
\newcommand{\Ted}{T_{\mathrm{enc}}}
\newcommand{\aed}{\alpha_{\mathrm{enc}}}
\newcommand{\vecC}{\boldsymbol{C}}
\newcommand{\AES}{\mathrm{AES}}
\newcommand{\IRecv}{\texttt{MPI\_IRecv} }
\newcommand{\Recv}{\texttt{MPI\_Recv} }
\newcommand{\ISend}{\texttt{MPI\_ISend} }
\newcommand{\Send}{\texttt{MPI\_Send} }
\newcommand{\Wait}{\texttt{MPI\_Wait} }
\newcommand{\Waitall}{\texttt{MPI\_Waitall} }
\newcommand{\Init}{\texttt{MPI\_Init}}
\newcommand{\pk}{{pk}}
\newcommand{\sk}{{sk}}
\newcommand{\Gather}{\texttt{MPI\_Gather}}
\newcommand{\Scatter}{\texttt{MPI\_Scatter}}
\newcommand{\calK}{\mathcal{K}}
\newcommand{\Unencrypted}{{\emph{Unencrypted}}}
\newcommand{\Naive}{{\emph{Naive}}}
\newcommand{\Ti}{T_i}
\newcommand{\Tt}{T_c}
\newcommand{\Te}{T_e}
\title{\sysrm: A Fast Encrypted MPI Library}
\author{
  
\IEEEauthorblockN{Abu Naser \hspace{0.1in} Cong Wu \hspace{0.1in} Mehran Sadeghi Lahijani \hspace{0.1in} Mohsen Gavahi \hspace{0.1in} Viet Tung Hoang \hspace{0.1in} Zhi Wang \hspace{0.1in} Xin Yuan}
\IEEEauthorblockA{Department of Computer Science, Florida State University, Tallahassee, FL 30306\\
\{naser,wu,sadeghil,gavahi,tvhoang,zwang,xyuan\}@cs.fsu.edu}
}
\date{}
\begin{document}

\maketitle

\begin{abstract}
  The cloud infrastructure must provide security 
	for  High Performance Computing (HPC) applications 
	of 
	sensitive data to execute in such an environment. 
	However, supporting security 
	in the communication infrastructure of today's public cloud
  is challenging, because
	current networks for data centers are so fast that adding encryption can incur
  very significant overheads. In this work, we introduce $\sysrm$,
  a high performance 
	encrypted MPI library that 
	supports communication  with both integrity and privacy. 
  We present the techniques in $\sysrm$ and report our benchmarking results
  using micro-benchmarks and NAS parallel benchmarks. The evaluation results
  indicate that the aforementioned techniques are effective in improving the
  performance of 
	encrypted communication.

\end{abstract}

  
\section{Introduction}

High Performance Computing (HPC) applications are often used to process
sensitive data, such as medical, financial, or scientific documents. 
In the past, HPC jobs were run in trusted local clusters of no external access, 
and security was taken for granted.
But the current trend of moving everything to the cloud has upended this assumption, 
introducing unprecedented security challenges to HPC. 
This paper aims to partially address this problem 
in the context of Message Passing Interface (MPI) environment, which is the de facto library for 
message-passing applications.

To ensure privacy and integrity of sensitive data, 
an obvious solution is to encrypt MPI communication. 
The easiest way to do that is to rely on existing encryption mechanisms at lower levels, 
such as IPSec. 
This approach allows one to encrypt MPI communication without modifying the MPI library. 
The second option is to naively incorporate  encryption to the MPI library 
by encrypting the whole message at the sender side, transmitting
the entire ciphertext, and decrypting the ciphertext at the
receiving side~\cite{Cluster:Naser19}.

\figref{fig:ipsec} illustrates the performance of the unencrypted baseline,
IPSec, and our library CryptMPI on a 10 Gbps Ethernet network. 
Although IPSec and CryptMPI rely on the same underlying encryption scheme, 
the performance of the former is far worse. 
In particular, the throughput that can be achieved by IPSec is only a small
fraction of the raw network throughput. In
our system, IPSec only achieves about one-third of the network throughput
for 1MB messages. Moreover, IPSec sequentializes encrypted communication
when there are multiple concurrent flows in the system. As shown
in \figref{fig:ipsec}, the aggregate throughput remains the same
when the number of concurrent flows increases from one to~four.

\begin{figure}[t]	
    \centering
      \includegraphics[width=0.4\textwidth]{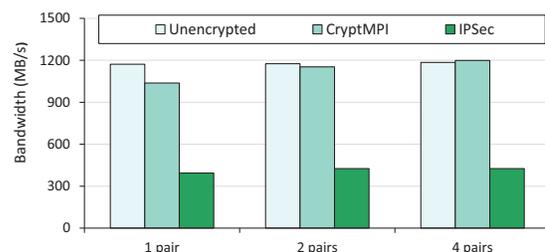}%
    \vspace{-1.5ex}
    \captionsetup{singlelinecheck=false}
    \caption{Motivating example: IPSec throughput on 10Gbps Ethernet
      with multiple concurrent flows and 1MB messages.}
    \label{fig:ipsec}
    \vspace{-1.5ex}
    \end{figure}

Several prior works~\cite{Cluster:Naser19,Ruan:2012:EMP:2197079.2197242,Maffina2013,RTSJ16,Shivaramakrishnan2014}
instead opt for naively adding encryption to the MPI library. 
But except for the recent work of Naser et al.~\cite{Cluster:Naser19}, 
all other prior systems contain severe security flaws. 
For example, ES-MPICH2~\cite{Ruan:2012:EMP:2197079.2197242}, the first encrypted MPI library,
uses the weak ECB (Electronic Codebook) mode of operation that has known
vulnerabilities~\cite[page 89]{Book:KL14}.

Even if one correctly implements the naive approach of adding encryption to the MPI library
as in~\cite{Cluster:Naser19}, this route is doomed to give very unsatisfactory performance, 
due to recent advances in the networking technology for data centers. 
\figref{fig:motivate} compares the communication performance of
unencrypted MPI library (MVAPICH) with the naive method
using the AES-GCM encryption scheme with a 128-bit key on a cluster with 40Gbps InfiniBand
in the Ping-Pong setting. 
As shown in \figref{fig:motivate},  naively adding encryption can incur very high overheads. 
For example, when the message size is 1MB, the throughput
drops from 3.0 GBps for unencrypted communication
to 1.2GBps for encrypted one.

\begin{figure}[t]
\centering
\vspace{-2.5ex}
\includegraphics[width=0.4\textwidth]{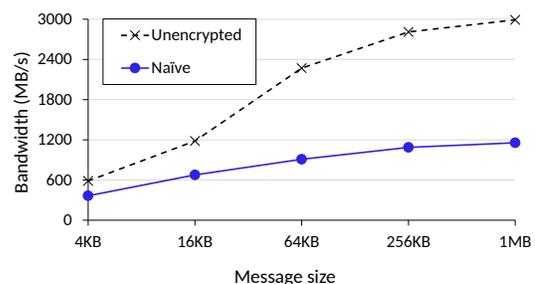}
\vspace{-1.5ex}
\caption{Motivating example: the one-way throughput by naively incorporating 
AES-GCM with a 128-bit key into MVAPICH.}
\label{fig:motivate}
\vspace{-3.0ex}
\end{figure}

The results above show that in today's data centers, even with the
state-of-the-art processor hardware support for cryptographic operations
(e.g., Intel AES-NI instructions to accelerate the AES algorithm,
or the x86 CLMUL instruction set to improve the speed of finite-field
multiplications), the overheads of encryption 
can easily dominate the encrypted communication time. This calls
for a careful design of an encrypted MPI library so as to exploit the
computing power at end-nodes and to minimize
the encryption  overheads. Compute nodes in modern systems
usually have lots of CPU cores with Hyper-Threading
Technology (HTT). 
For instance, each node of the world's fastest supercomputer Summit
supports 176 hyper-threads~\cite{summit},
meanwhile a single processor~\cite{amdryzen} of AMD has
128 hyper-threads. These imply that in the very near future compute nodes will
be equipped with more hyper-threads than what applications can use. 
$\sysrm$ targets this type of systems by leveraging hyper-threads to deliver
higher performance.

In this work, we introduce $\sysrm$, a
high-performance encrypted MPI library. $\sysrm$ encrypts MPI communication under AES-GCM, 
providing both integrity and privacy. Since encryption overheads are relatively small for small messages~\cite{Cluster:Naser19}, 
this work focuses on optimizing the point-to-point communication performance for large messages.
$\sysrm$ employs a technique to maintain the required security properties while allowing the
communication and encryption to be pipelined and be performed by multiple threads. Moreover, 
by using an accurate performance model for Ping Pong experiments, 
$\sysrm$ automatically derives the parameters used at runtime for each communication, such
as the number of threads 
and the size of each chunk of data in 
pipelined encrypted communication.
We will describe the techniques in $\sysrm$ and report our benchmarking
results, which demonstrates that our methods are effective in achieving fast encrypted
MPI communication. 

The rest of the paper is structured as follows. Section~\ref{sec:related}
discusses the related work. Section~\ref{sec:backg}
introduces the background of this work.
Section~\ref{sec:oursystem} presents our system. Section~\ref{sec:perf}
reports
the results of our performance study. Finally, Section~\ref{sec:conclu}
concludes the paper.

\section{Related work}
\label{sec:related}

Most prior systems of encrypted
MPI~\cite{Ruan:2012:EMP:2197079.2197242,Maffina2013,RTSJ16,Shivaramakrishnan2014} only use weak encryption schemes, and therefore fail to deliver privacy. 
For example, the ES-MPICH2 system~\cite{Ruan:2012:EMP:2197079.2197242}
relies on the Electronic Codebook (ECB) mode, 
which is deemed insecure by cryptography textbooks~\cite[page 89]{Book:KL14}. 
Likewise, the VAN-MPICH2 system~\cite{RTSJ16} attempts to build its
own encryption scheme, but this is trivially broken~\cite{Cluster:Naser19}. 
In addition, those systems  also do not provide integrity. 
Some, for example, use encryption with a form of checksum~\cite{Maffina2013}, 
but this approach does not provide integrity if the encryption scheme is the standard Cipher Block Chaining (CBC) mode of encryption~\cite{C:HB01}. 
See the work of Naser et al.~\cite{Cluster:Naser19} for details of
the insecurity of the prior systems of encrypted MPI. Additionally, these
earlier systems were designed when the communication bottleneck was in
network links. As such, they do not consider techniques to optimize the use
of computing resources at end-nodes to maximize the encrypted communication
performance. 

The work of Naser et al.~\cite{Cluster:Naser19} is the only existing
system that provides adequate privacy and integrity protection for MPI. 
However, they only aim to investigate the performance of existing
cryptographic libraries in the MPI environment via a vanilla implementation. 
Consequently, their system incurs high overheads as shown in
\figref{fig:motivate}. This motivates us to build a system that
meets both security and efficiency requirements of MPI applications.


\section{Background}
\label{sec:backg}

\noskipheading{Blockciphers.} A blockcipher on message space $\calM$ and \allowbreak
keyspace~$\calK$ is a function
$E: \calK \times \calM \to \calM$ such that for every key $K \in \calK$, the function $E_K(\cdot) = E(K, \cdot)$ is 
a permutation on~$\calM$. 
A typical blockcipher is AES.

\heading{Symmetric Encryption.}  Syntactically, a (symmetric) encryption scheme is a triple of algorithms
$(\Gen, \Enc, \Dec)$. Initially, the sender and receiver need to share a secret key~$K$
that is generated by~$\Gen$. 
Each time the sender wants to encrypt a message $M$, she creates a ciphertext $C$ via
$C \gets \Enc(K, M)$. 
Upon receiving~$C$, the receiver decrypts via $M \gets \Dec(K, C)$. 
An encryption scheme is often built upon a blockcipher such as~AES.

Traditionally, an encryption scheme's only goal is to deliver \emph{privacy}, 
meaning that an adversary cannot distinguish the ciphertexts of its chosen messages
with those of uniformly random strings of the same length. 
Several standard encryption methods, such as 
the Cipher Block Chaining (CBC) mode or Counter (CTR) mode~\cite{NIST:38A}
fail to provide  \emph{integrity} (meaning that the adversary cannot modify ciphertexts without being detected).\footnote{Actually, 
an adversary may still replace a ciphertext by an old one; this is known as a \emph{replay attack}. 
We do not consider replay attacks in this work. 
}
In this work, we focus on the Galois Counter Mode (GCM)~\cite{NIST:38D}. 
GCM is the fastest standard encryption method that \emph{provably} provides both privacy and integrity~\cite{McGrew:2004:SPG:2158177.2158215}.

\smallskip

\noskipheading{GCM overview.} GCM is a \emph{nonce-based} encryption scheme. 
Each time the sender encrypts, she needs to provide a 12-byte \emph{nonce}~$N$, a string that should never repeat, 
to produce a ciphertext $C \gets \Enc(K, N, M)$. 
Conventionally, a GCM nonce is implemented as a counter or a uniformly random string. 
Each ciphertext $C$ is 16-byte longer than the corresponding message $M$
as it includes a 16-byte tag for integrity checking. 
The sender then needs to transmit both the ciphertext $C$ and the nonce $N$ to the receiver. 
The latter then decrypts via $M \gets \Dec(K, N, C)$. 
Thanks to its speed and provable security, GCM is widely used in network protocols, 
such as SSH, IPSec, and TLS.

According to NIST standard SP 800-38D~\cite{NIST:38D}, 
the blockcipher of GCM must be AES; we  often write AES-GCM
to refer to an instantiation of GCM with AES as the blockcipher. 
There are three corresponding key sizes for AES: 128, 192, or 256 bits. 
Longer key length means better security against brute-force attacks, 
at the cost of slower execution time. 
In this work we only consider 128-bit keys to achieve the best possible performance.

\heading{Public-key Encryption.} Public-key encryption (PKE) is analogous to symmetric-key encryption, 
but the (public) encryption key is different from the (secret) decryption key. 
Standard PKE scheme (such as RSA~\cite{RSA}) are more expensive than standard symmetric-encryption schemes (such as GCM)
by several orders of magnitude. 
One therefore often just uses PKE to distribute, say a shared GCM key, 
and then switches to GCM encryption. 
In this work, we use the OAEP method of RSA~\cite{OAEP}, 
which is \emph{provably} secure.

\section{\sysrm: A Fast Encrypted MPI Library}
\label{sec:oursystem}

$\sysrm$ supports encrypted MPI communication via AES-GCM from the BoringSSL library~\cite{boringssl}. 
It is implemented on top of MVAPICH2-2.3.2 for systems with
InfiniBand and Omni-path-based interconnects. 

Our starting point is the vanilla implementation of Naser et al.~\cite{Cluster:Naser19}.
Since encryption overheads are relatively low
for point-to-point communication of small messages \cite{Cluster:Naser19},
this work focuses on optimizing point-to-point communication of
large messages (meaning 64KB and beyond). 
The following MPI routines are modified: $\Send$, $\Recv$, $\ISend$, $\IRecv$, $\Wait$, and $\Waitall$.

To achieve high performance, 
$\sysrm$ incorporates two optimization techniques: 
pipelining and multi-threading. 
Integrating those with encryption without compromising security is challenging. 
We resolved that by using theory from streaming encryption~\cite{C:HRRV15} and Google's Tink library~\cite{tink}. 
It is also nontrivial to decide the values of the parameters of those optimizations---for example, the number of threads in multi-threading---for 
maximizing the performance. 
To find the best choice of the parameters, we developed a
performance model for 
the parameterized version of CryptMPI in the ping-pong setting. 
CryptMPI then uses that model to derive the values for the parameters that
maximize the ping-pong performance for each communication. 
Empirical results confirm that these are also good choices in
other settings. 

Below, we will first describe our threat model, 
and then discuss the details of the optimization techniques. 
For the convenience of discussion, 
we will adopt the following notation. 
Let $\bits^n$ be the set of $n$-bit binary strings, 
and let $x \concat y$
denote the concatenation of the two strings $x$ and~$y$.
For an integer $r$, we write $[x]_r$ to denote a $r$-byte encoding
of a string~$x$.

\heading{Threat model.}
Under our model, we consider an external 
adversary that can observe and tamper with network packets. 
We however assume that compute nodes are secure. 
As a result, $\sysrm$ only encrypts inter-node communication.

\heading{Pipelining.}
A reason for the high encryption overhead in the vanilla
implementation of Naser et al.~\cite{Cluster:Naser19}
is that the receiver has to wait for the sender to
encrypt the \emph{entire} message. A simple solution is to chop the
message into multiple segments, and encrypt and transmit them individually, which
results in pipelining the tasks in  the sender and receiver.
Under this approach, the receiver can start decrypting as soon as the
first ciphertext segment arrives. Hence, with pipelining, message transmission,
encryption, and decryption can all be overlapped.

The pipelining method above however breaks the authenticity of GCM:
the adversary can reorder the ciphertext segments of a message,
or even drop some of them without detection. This is a known issue for
streaming authenticated encryption~\cite{C:HRRV15}.
The treatment, suggested by~\cite{C:HRRV15},  is to embed the following
information in the nonce:
(i) a flag to indicate whether the segment is the last one
for the given message, and (ii) a counter to specify the position of the current segment within the given message.
Typically one uses one byte to encode the flag, and four bytes for the counter.
This approach is, however, problematic in our context.
In particular, if we pick the seven-byte remainder of a GCM nonce as a random string,
then we will be likely to have nonce repetition after encrypting just $2^{28}$ messages,
due to the well-known Birthday Problem.\footnote{
The Birthday Theorem (see, for example, \cite[Appendix A.4]{Book:KL14}) states that if we sample $q$ elements $x_1, \ldots, x_q$ uniformly and independently from a set of size $N$,
for $q \leq \sqrt{2N}$,
then the chance that there are some $i \ne j$ such that $x_i = x_j$ is at least $q(q - 1) / 4N$.}

To resolve the issue above, we adopt the approach in Google's 
library Tink~\cite{tink} to use a different \emph{subkey} for each message, but our key derivation method is simpler. 
Specifically, to encrypt a message $\vecM$ under a master key~$K$, 
we first pick a 128-bit random seed $V$, and derive a subkey $L \gets \mathrm{AES}_K(V)$. 
We then use GCM with key $L$ to encrypt the segments of~$\vecM$, 
with the nonces following the suggestion in~\cite{C:HRRV15}. 
The seed $V$ and the message length $|\vecM|$ will be sent together with the first ciphertext segment. 

The pseudocode of our encryption method  is given in Algorithm~\ref{algo:chop}. 
In addition to the ciphertext segments, the ciphertext~$\vecC$ consists of a header
that provides the seed $V$, the length of the message $\vecM$, and the size of a message segment.
In implementation, the computation of $\vecC = (\Header, C_1, \ldots, C_t)$
will be interleaved with non-blocking communication
so that the receiver can start decrypting $C_i$ while the sender is still encrypting or sending~$C_{i + 1}$.

\RestyleAlgo{boxed}
\IncMargin{1em}
\begin{algorithm}[h]
  \DontPrintSemicolon
  \SetKwFunction{alltoall}{MPI\_Alltoall}\SetKwFunction{nonce}{RAND\_bytes}
  \BlankLine
	\textbf{Input:} A message $\vecM$ of $m$ bytes. \;
	\textbf{Parameter:} A key $K$, and an integer $t$  \; 
Pick a 16-byte random seed $V$ \;
\tcc{Derive a subkey $L$} 
$L \gets \AES_K(V)$ \\
$s \gets \lceil m / t \rceil$;~ $\Header \gets (V, m, s)$ \;  
\tcc{Chop $\vecM$ into $t$ segments} 
  $(M_1, \ldots, M_t) \gets \vecM$ \;
\tcc{Encrypt each segment} 
\For{$i\gets 1$ \KwTo $t$}{	
\IF $i = m$ \THEN $\flag \gets 1$ \ELSE $\flag \gets 0$   \;
$N_i \gets [0]_7 \concat [\flag]_1 \concat [i]_4$;~ $C_i \gets \Enc(L, N_i, M_i)$ \;
}
\Return $(\Header, C_1, C_2\ldots, C_t)$
\caption{How to chop and encrypt a message~$\vecM$ under master key $K$.  
Here 
$\Enc(L, N, M)$ denotes the use of GCM to encrypt a message $M$ under nonce $N$ and key $L$.
 }\label{algo:chop}

\end{algorithm}\DecMargin{1em}

When the receiver gets the header, it 
derives the number of segments $t$ that it needs to decrypt from the segment size~$s$
and the message size $m$. 
Later, if GCM decryption flags some ciphertext segment as invalid, 
or if the receiver does not get the correct number of ciphertext segments, 
it will report a decryption failure. 
In particular, if an adversary tampers with the header then thanks to the integrity property of GCM, 
this will result in a decryption failure.

A very recent paper by Hoang and Shen~\cite{CCS:HS20}
gives a proof for the security of  a generalized version of the encryption in Google's library Tink
that also covers our approach. 
Under the analysis in~\cite{CCS:HS20}, our encryption scheme provides both privacy and integrity as long as
the 128-bit seeds are distinct, 
which is very likely according to the following result. 

\begin{proposition}
Suppose that we sample $q$ seeds $V_1, \ldots, V_q$ uniformly at random from $\bits^{128}$. 
Then the chance that they are distinct is at least $1 - q^2 / 2^{129}$.
\end{proposition}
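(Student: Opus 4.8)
The plan is to bound the failure probability directly by a union bound over all pairs of seeds, which is the standard way to prove one direction of the Birthday bound (we only need the easy upper-tail estimate here, not the matching lower bound quoted in the earlier footnote).

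First I would define the bad event $\mathsf{Coll}$ as ``there exist indices $i \neq j$ with $V_i = V_j$'', so that the probability the seeds are distinct is exactly $1 - \Pr[\mathsf{Coll}]$. Then I would observe that $\mathsf{Coll} = \bigcup_{1 \le i < j \le q} \{V_i = V_j\}$, and apply the union bound to get $\Pr[\mathsf{Coll}] \le \sum_{1 \le i < j \le q} \Pr[V_i = V_j]$. For each fixed pair $i < j$, since $V_i$ and $V_j$ are independent and uniform on $\bits^{128}$, we have $\Pr[V_i = V_j] = 2^{-128}$ (condition on $V_i$; then $V_j$ must hit that single value out of $2^{128}$). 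There are $\binom{q}{2} = q(q-1)/2$ such pairs.

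Finally I would combine these: $\Pr[\mathsf{Coll}] \le \binom{q}{2} \cdot 2^{-128} = \dfrac{q(q-1)}{2} \cdot 2^{-128} \le \dfrac{q^2}{2} \cdot 2^{-128} = \dfrac{q^2}{2^{129}}$, where the middle inequality just uses $q(q-1) \le q^2$. Hence the seeds are distinct with probability at least $1 - q^2/2^{129}$, as claimed.

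There is no real obstacle here; the only thing to be slightly careful about is the final slackening $q(q-1)/2 \le q^2/2$ (needed to match the clean statement with $2^{129}$ in the denominator rather than carrying $\binom{q}{2}$), and the implicit assumption — consistent with how the seeds are generated in Algorithm~\ref{algo:chop} — that the $V_i$ are mutually independent, which is what justifies the per-pair probability being exactly $2^{-128}$.
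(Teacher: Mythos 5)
Your argument is correct and is exactly the paper's proof: a union bound over the $\binom{q}{2} \leq q^2/2$ pairs, with each pair colliding with probability $2^{-128}$ by independence and uniformity, giving failure probability at most $q^2/2^{129}$. Your added remarks on the slackening $q(q-1)/2 \leq q^2/2$ and the independence assumption are accurate but do not change the substance.
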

\begin{proof}
Since the seeds are sampled independently and uniformly random, 
for every $i \ne j$, the chance that $V_i = V_j$ is at most $2^{-128}$. 
Since there are ${q \choose 2} \leq q^2 /2$ pairs of seeds, 
by the Union Bound, the chance that two seeds in some pair are the same is at most $q^2 / 2^{129}$. 
\end{proof}

\heading{Multi-threading encryption/decryption.} As shown
in \figref{fig:motivate},
there is a huge gap between single-core encryption speed and 
MPI bandwidth. 
Data in \figref{fig:motivate} are for a 40Gbps interconnect; 
the situation is much worse for current generation HPC systems with 100Gbps interconnects. 
On the other hand, modern machines may have many cores that MPI applications
do not fully utilize. A natural way to fill the gap is to 
chop a message into $t$ segments, 
and use $t$ threads to encrypt them simultaneously via Algorithm~\ref{algo:chop}, 
where $t$ is a parameter. 
The header will be computed and sent first.
Then, each thread $i$ will compute the corresponding ciphertext segment~$C_i$,
and the entire $(C_1, \ldots, C_t)$ will be sent as a whole.

\heading{Putting things together: $(k, t)$-chopping algorithm.}  To combine pipelining and multi-threading,
we chop a message into $kt$ segments and encrypt them via Algorithm~\ref{algo:chop},
where $k$ and $t$ are parameters, 
but the interleaving of encryption and communication is as follows. 
Again, the header will be computed and sent first.
Then, we will use $t$ threads to compute the first $t$-segment chunk $(C_1, \ldots, C_t)$ and send it as a whole,
and then compute  and send the next chunk $(C_{t + 1}, \ldots, C_{2t})$, and so on.
For convenience, we refer to this algorithm as \emph{$(k, t)$-chopping}. 
Note that when $k = 1$ then the
$(k, t)$-chopping algorithm falls back to basic multi-threading encryption (namely using
$t$ threads to encrypt the message), 
and when $t = 1$, it corresponds to the special case where only pipelining is used.

Still, GCM encryption is somewhat slow for tiny messages.
The encryption speed, however, gathers momentum quickly and gets saturated at around 32 KB.
CryptMPI, therefore,
uses the $(k, t)$-chopping algorithm only if the message size is at least 64KB.
For small messages, we will directly encrypt via GCM so that we do not waste time in deriving subkeys.
To simplify the decryption process, we will still create a header when encrypting a small message;
here a 12-byte nonce will be included instead of a 16-byte random seed.
Headers will additionally contain an opcode to inform receivers of the encryption algorithm.

We stress that one needs to maintain \emph{two} separate keys, one for Algorithm~\ref{algo:chop} to encrypt large messages,
and another for direct GCM encryption of small messages.
Violating this key separation will break security.
To see why, suppose that we use a single key $K$ for encrypting messages of all sizes,
and suppose that at some point we encrypt a known 16-byte message $X$ directly via GCM under a random nonce~$N$.
The resulting ciphertext consists of $\AES_K(V) \xor X$ and a tag~$T$,
where $\xor$ denotes the xor of two equal-length strings, and $V = N \concat [1]_{4}$.
Since $X$ is known, an adversary can extract $L \gets \AES_K(V)$ from the ciphertext.
It then can create a valid ciphertext for \emph{any} large message $\vecM$ as follows.
Instead of picking a 16-byte random seed and deriving the corresponding subkey,
it uses the string $V$ above as the seed and the string~$L$ above as the subkey.
It then runs lines 5--11 of Algorithm~\ref{algo:chop} to encrypt $\vecM$.

\heading{Modeling $(k, t)$-chopping.}
To find the best values of $k$ and~$t$ in the $(k, t)$ chopping algorithm, 
we developed a performance model for it in the ping-pong setting, 
for generic $k$ and~$t$. The model is highly accurate on all of the systems
that we tested including all of the ones used in our experiments.
\figref{fig:model-valid} shows an example: the predicted results and measured
results using the ping-pong program in our Noleland cluster with InfiniBand
(described in Section~\ref{sec:perf}) match well.   
We later used the model to estimate the optimal choice of $k$ and~$t$ that maximizes the 
ping-pong performance. 
The model has two sub-components: a model for
unencrypted communication, and another for multi-threading encryption.

\begin{figure}[t]
	\centering
	\vspace{-2.5ex}
		\includegraphics[width=0.35\textwidth]{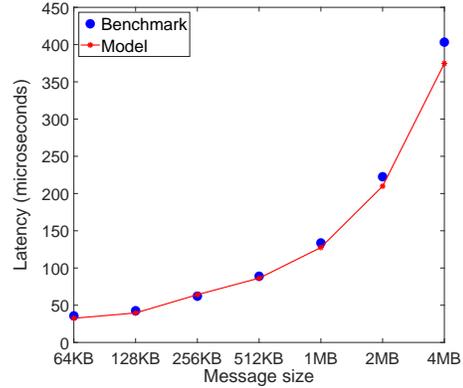}
		\vspace{-1ex}
	\caption{Latency of encrypted ping-pong under CryptMPI on InfiniBand: benchmark versus model prediction. }
		\vspace{-0.5ex}
	\label{fig:model-valid}
	\vspace{-2.5ex}
\end{figure}

We use the classic Hockney model~\cite{Hockney94} to model
communication. Under the Hockney model, the time to send an $m$-byte message is
\[
\Tcomm(m) = \acomm + \bcomm \cdot m \enspace,
\]
where $\acomm$ is the network latency, and $\bcomm$ is the transmission rate.
The parameters $(\acomm,\bcomm)$ can be derived from a ping-pong benchmark.

To model the time $\Ted(m, t)$ to encrypt an $m$-byte message via~$t$ threads, we picture multi-threading encryption as multiple-pair
communication. Thanks to this viewpoint, 
we can use the max-rate model of Gropp, Olson, and Samfass~\cite{gropp2016modeling} for concurrent point-to-point communications
for modeling $\Ted(m, t)$. Specifically, 
\[
\Ted(m, t) = \aed + \dfrac{m}{A+B\cdot (t-1)} \enspace,
\]
where $\aed$ is the initial overhead, $A$ is the encryption throughput for the first thread, 
and $B$ is the encryption throughput for each subsequent thread. 
This captures the fact that adding more threads after the first one will have degraded performance per thread.


For given $k$ and $t$, we now consider modeling the
$(k, t)$-chopping algorithm in the ping-pong setting of $m$-byte messages. 
For simplicity, assume that $m$ is a multiple
of $kt$, and let $s = m / k$ be the size of a $t$-segment chunk. 
We assume that the encryption
and decryption of a message take a similar amount of time, which holds for AES-GCM.

Recall that under the $(k, t)$-chopping algorithm, 
at the beginning, the sender encrypts the first  $t$-segment chunk with $t$ threads,
which takes $\Ted(s, t)$ time units. Next, encrypting the $i$-th plaintext chunk
and transmitting the $(i-1)$-th  ciphertext one overlaps with each another,
for $i = 2, \ldots, k$, 
and thus those take totally of  
\[ (k - 1)  \cdot \max\{ \Ted(s, t), \bcomm \cdot s \}  \] 
time units. Note that as the ciphertext chunk are sent successively,
the latency term  $\acomm$ is only manifest in the transmission of the
last chunk. 
It then takes $\Tcomm(s)$ additional time units for the
last ciphertext  chunk to reach the receiver. 
During this time, the receiver can decrypt all but the last ciphertext chunk, 
and thus it will need $\Ted(s, t)$ additional time units to decrypt the
last ciphertext chunk.

\smallskip 

\noindent
\textbf{The complete model:} Summing up, the total time for the
$(k, t)$-chopping algorithm to send an $m$-byte message is 
\[ 
2 \cdot \Ted(s, t) +  (k-1)\cdot \max \{\Ted(s, t), \bcomm\cdot s \}+ \Tcomm(s) \enspace.
\] 
If the network speed is slower than the (multi-threading) encryption speed, 
then the total time is around $\Tcomm(m) + 2 \cdot \Ted(s, t)$, 
meaning that we have to pay for the encryption-decryption cost of
a single chunk instead of the entire message, 
and thus the encryption and decryption cost almost vanishes. 
When the network speed is faster than the encryption speed, 
the total time is nearly $(k + 1) \cdot \Ted(s, t) + \Tcomm(s)$, 
meaning that the encryption-decryption cost is reduced to
nearly a factor of $2t$
in comparison to the naive method when single-core encryption, transmission, and single-core decryption
are performed in sequence. 

There are five parameters in our model: $\acomm$, $\bcomm$, $\aed$, $A$, and $B$.
For a given system, the parameters can be derived by the measurement
results from the (often publicly available) ping-pong benchmark and the multi-threading encryption benchmark, together with some library and architecture information.

In the following, we will use our local Noleland cluster with InfiniBand as an
example to illustrate the process to derive the model parameters. The measured
unencrypted ping-pong performance for the cluster
is shown in \figref{fig:pingpong_infiniband}. These data, together with the
library information (eager threshold), are used to derive the
parameters for the Hockney model via least squares regression,
which is shown in Table~\ref{tab:hockney_param}.

\begin{table}[!t]
		\centering
		\captionof{table}{The values of parameters $\acomm$  and $\bcomm$ for unencrypted one-to-one communication on InfiniBand.}
			\begin{tabular}{p{0.12\linewidth}*4{p{0.2\linewidth}}}
			\toprule[1.25pt]
			&  & {$\acomm$ ($\mu$s) } & { $\bcomm$($\mu$s/B)} \\ \midrule  
			\multirow{1}{*}{\textbf{Eager}} & 
			& 5.54 & $7.29 \times 10^{-5}$  \\\midrule  
			\multirow{1}{*}{\textbf{Rendezvous}} 
			&  & 5.75 & $7.86\times 10^{-5}$ \\ 
			\bottomrule[1.25pt]
		\end{tabular}
		\label{tab:hockney_param} 
		\vspace{-1ex}
\end{table}

\begin{table}[!t]
		\centering
		\captionof{table}{The values of parameters $(\aed, A, B)$ for multi-threading encryption.}
		\begin{tabular}{cccc}
			\toprule[1.25pt]
			\textbf{} &  {$\aed$ ($\mu$s) } & { $A$ (B/$\mu$s)}  & $B$ (B/$\mu$s) \\ \hline
			\textbf{Small} & 4.278&  5265 & 843 \\ \midrule  
			\textbf{Moderate} & 4.643 &  6072 & 4106 \\    \midrule  
			\textbf{Large} & 5.07 &  5893 & 5769 \\   
			\bottomrule[1.25pt]
		\end{tabular}
		\label{tab:enc_param} 
	\vspace{-1ex}
\end{table}

The measured multi-threading encryption throughput results are shown
in \figref{fig:enc_dec_gcc_throughput}. These results, together with
architecture information (L1 and L2 cache sizes), are used to derive
the model parameters. To account for the L1 and L2 cache size in our system,
we consider three levels of message size: small (below $32$KB),
moderate (from 32KB to under 1MB), and large (at least $1$MB).
We then use different values of the parameters $(\aed, A, B)$
for each level.
The values of $(\aed, A, B)$ are given in Table~\ref{tab:enc_param};
they are obtained via Matlab's non-linear least square on the result of the
encryption throughput benchmark.

\heading{Parameter selection.}
Based on the model above, for each message size, 
we estimate the values for the parameters $k$ and~$t$ to
maximize performance for the Ping Pong setting. 
Empirical results confirm that those values
are good choices  for other settings. 
In other words, although we set the parameters for every setting based on the model of Ping Pong, 
those choices turn out to work well. 
Specifically, for a message of $m$-KB, we let 
\[k = \lfloor \max\{ 1, m / 512\} \rfloor \enspace. \]
Our experiments were performed in two systems: our local Noleland cluster
and an Xcede resource,  PSC Bridges cluster. 
For the Noleland cluster with InfiniBand, we let 
\[
t =
\begin{cases}
2 & \text{if } 64 \leq m < 128\\
4 & \text{if } 128 \leq m < 512\\
8 & \text{if } m \geq 512 \enspace.
\end{cases}
\]
In PSC Bridges with Omni-path, we set
\[
t =
\begin{cases}
4 & \text{if } 64 \leq m < 256\\
8 & \text{if } 256 \leq m < 512\\
16 & \text{if } m \geq 512 \enspace.
\end{cases}
\]

$\sysrm$ also takes into account some system constraints to
avoid creating too many threads or too many outstanding
send requests due to pipelining. Both can be detrimental to performance. 
The conceptual idea applies to all systems, but specific numbers may differ. 
Let $T_0$ be the number of hyper-threads that the current rank is allocated.\footnote{
In Noleland there are totally $T = 32$ hyper-threads, whereas $T = 28$ for PSC Bridges. 
If there are $r$ ranks in a node then each rank will be allocated $T_0 = \lfloor T / r \rfloor$ hyper-threads. 
}
Typically~$T_1$ hyper-threads out of the $T_0$ available ones are used for communication; in our systems $T_1 = 2$. 
Requesting more threads than the limit $T_0 - T_1$ will generally slow down
performance, so $T_0 - T_1$ serves as an upper bound of the number of threads
that we run. Thus, CryptMPI will request for $\min\{T_0-T_1, t\}$ threads. 
If there are more than 64 outstanding send requests in this MPI rank
then CryptMPI will set $k = 1$. 
Otherwise, $k$ is set as suggested by the model. 

	The pipelining may cause problems if it is not handled carefully.
	For instance, a receiver may post a bigger message size request instead of the
	actual message size during the receive.  To solve the issue, we post one or multiple pipeline
	requests based on posted message size in the receiver. However, that may break the order
	of the pipeline or could block the progress of other pipeline chunks. To avoid those
	issues we post each pipelined request with a different tag in both the sender and receiver.
	Recall that receiver receives a header first for each message. From the header,
	we extract the actual message size and canceled any extra pipeline requests. 

%

\heading{Key distribution.} We modified the $\Init$ routine to include the following basic key distribution. 
At first, each process $i$ generates a pair $(\pk_i, \sk_i)$ of RSA public and secret keys. 
It then participates in the unencrypted $\Gather$ to collect all public keys at the process $0$. 
Next, process $0$ generates two AES keys $(K_1, K_2)$. 
It then uses $\pk_i$ to encrypt $(K_1, K_2)$ for each process $i$ via  the RSA-OAEP method~\cite{OAEP} of the BoringSSL library, 
resulting in a corresponding ciphertext~$C_i$. 
It then runs $\Scatter$ to distribute each $C_i$ to process~$i$. 
The latter then uses $\sk_i$ to decrypt $C_i$, getting $(K_1, K_2)$.

The key distribution above works against a passive adversary that merely collects traffic packets, 
thanks to the (provable) privacy property of RSA-OAEP~\cite{OAEP}.
It, however, fails to protect against active adversaries. 
Such an adversary can create its own pair $(K^*_1, K^*_2)$, 
encrypts that with each (publicly available) $\pk_i$ to get a corresponding $C^*_i$, intercepts and replaces the packets
of $C_i$ by those of $C^*_i$. 
MPI nodes cannot detect this modification, and will subsequently communicate under the adversary's keys. 
Dealing with active adversaries requires a proper public-key infrastructure; 
we leave that as future work.

\section{Experiments} \label{sec:perf}

In this section, we empirically evaluate the performance of $\sysrm$ on
two platforms and compare $\sysrm$ with
conventional unencrypted MPI, which will be called \Unencrypted,
and the naive approach of
Naser et al.~\cite{Cluster:Naser19}, which will be called {\Naive}.
We begin with a description of our experiment setup
and our benchmark methodology. 
We then report the experiment results.

\heading{System setup.}
The experiments are performed on two systems. One is a local {\em Noleland}
cluster with InfiniBand connectivity. Machines in our cluster are
Intel Xeon Gold 6130 with 2.10 GHz base frequency. Each node 
has 16 cores, 32 threads, and 187GB DDR4-2666 RAM, and runs CentOS~7.5. 
It is equipped with a 100 Gigabit InfiniBand one (Mellanox MT28908 Family, ConnectX-6). 
Allocated nodes were chosen manually. Each experiment used the same
node allocation for all measurements.  All ping-pong results used two
processes on different nodes.

The other system is an Xsede resource, PSC Bridges with Regular Memory
partition~\cite{XSEDE}. The CPUs
are Intel Haswell E5-2695 v3, with 2.3--3.3 GHz base frequency. Each
node has 14 cores, 28 threads, and 128GB DDR4-2133 RAM, and runs CentOS~7.6.
It is equipped with a 100 Gigabit Intel Omni-path interconnect.

We implemented our prototypes on top of 
MVAPICH2-2.3.2  for both InfiniBand and Omni-path. The underlying
encryption library is BoringSSL (using AES-GCM-128 in all experiments). 
All MPI code was compiled 
with the standard set of MVAPICH compilation
flags and optimization level O2, 
together with the \texttt{-fopenmp} flag to enable OpenMP. 
We compiled BoringSSL 
separately under default settings, 
and linked it with MPI libraries during the linking phase of 
MVAPICH.

Multi-threading is handled via OpenMP-3.1. 
In MVAPICH, we set the following environment variables 
\begin{eqnarray*}
\texttt{MV2\_ENABLE\_AFFINITY=1}  \\	
\texttt{MV2\_CPU\_BINDING\_POLICY=hybrid}  \\
\texttt{MV2\_HYBRID\_BINDING\_POLICY=spread}
\end{eqnarray*}
to distribute resources evenly among MPI ranks.


\heading{Benchmarks.} We consider the following suites of benchmarks: 

\begin{itemize}

\item Ping-pong: This benchmark measures the uni-directional throughput when 
two processes send back and forth via blocking send and receive. 
In each  measurement, 
the two processes send messages of the given size back and forth  for 10,000 times
for messages under 1MB, and 1,000 times otherwise. 


\item OSU micro-benchmark 5.6.2~\cite{OSUBM}: We used the Multiple Pair Bandwidth Test
benchmark in OSU suite to measure aggregate uni-directional throughput of multiple one-to-one
flows between two nodes. 
Each measurement consists of a 100-round loop. 
In each loop the sender sends 64 non-blocking messages of the given size
to the receiver, and waits for the replies. 


\item 2D, 3D, and 4D stencil kernels: We will describe the 2D kernel;
3D and 4D kernels are similar. 
The five-point 2D stencil kernel is as follows. 
In a loop of 1250 rounds, each process in a $\sqrt{N} \times \sqrt{N}$ grid (here $N$ is number of processes) 
performs some matrix multiplications and then sends messages of  $m$~KB 
to its four neighbors via non-blocking sends, and 
finishes the communication via $\Waitall$. 
The kernels allow the computation load to be tuned to achieve a given
ratio between computation and communication. In the experiments,
we set the computation load so that for the unencrypted conventional MPI, 
the average computational time is about $p\%$ of the average total time. 
where $p \in \{25, 50, 75\}$ on the Noleland cluster and 
$p \in \{30, 60, 80\}$ on the PSC Bridges.
  

\item NAS parallel benchmarks~\cite{Bailey:1991:NPB:2748645.2748648}: 
To measure performance of CryptMPI in applications,
we used the CG, LU, BT, and SP tests in the NAS parallel benchmarks, 
all under class D size. 
Collective functions in the NAS benchmarks are unencrypted for both CryptMPI and $\Naive$. 


\end{itemize}

\noskipheading{Benchmark methodology.}
For each experiment, we ran it at
least 10 times, up to 100 times, 
until the standard deviation was within 5\% of the arithmetic mean.
If after 100 measurements, the standard deviation was still big then
we kept running 
 until the 99\% confidence interval 
was within 1\% of the  mean. 

\heading{Multi-thread encryption throughput.}
Before we get into the details
of the benchmark results, it is instructive to
evaluate multi-threading performance of AES-GCM on a single node, 
as it will give us a better understanding of the performance of CryptMPI. 
For each size, we measured the time for performing 500,000 times
the encryption 
of messages of the given size under different numbers of threads. 
The measurements were stable, as they involved only local computation. 
We therefore ran this experiment at least~5 times until the standard
deviation was within 5\% of the arithmetic mean. 

The average encryption throughput of AES-GCM-128 on the local Noleland
is shown in \figref{fig:enc_dec_gcc_throughput}, 
for various message sizes and numbers of threads.  
AES-GCM has about the same the encryption and decryption speed, 
thus the reported throughput here is also the decryption throughput. 
For convenience, in the analysis,
we use the term \emph{encryption-decryption} throughput for the speed of  encrypting and then decrypting a message; 
this is about half of the encryption throughput. 

The encryption throughput on PSC Bridges is shown
in \figref{fig:xsede-enc}, for
various message sizes and numbers of threads. 
The encryption throughput in Bridges is much lower than that
in Noleland, because machines in the latter are newer. 

\begin{figure}[t]
	\centering
		\includegraphics[width=0.45\textwidth]{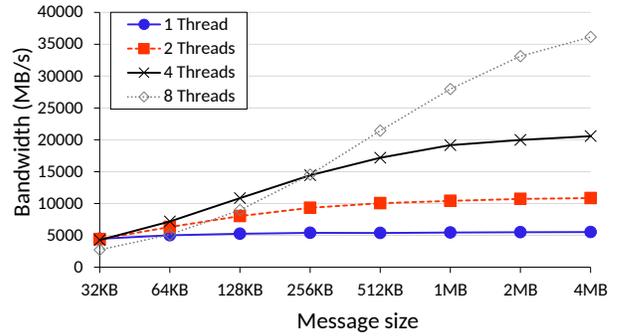}
		\vspace{-1.5ex}
	        \caption{Encryption throughput of AES-GCM (128-bit key) on a Noleland node.}
	\label{fig:enc_dec_gcc_throughput}
	\vspace{-2ex}
\end{figure}

\begin{figure}[t]
	\centering	
		\includegraphics[width=0.45\textwidth]{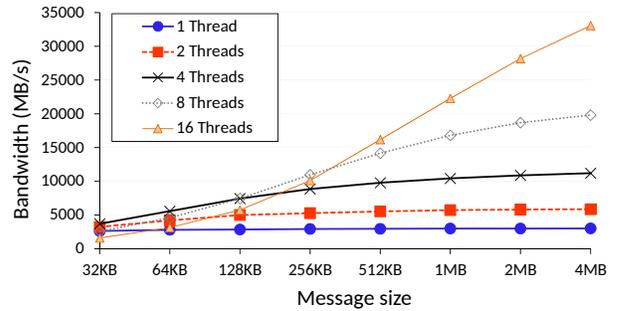}
		\vspace{-1.5ex}
	        \caption{Encryption throughput of AES-GCM (128-bit key) on a PSC Bridges node.}
	\label{fig:xsede-enc}
	\vspace{-2ex}
\end{figure}

\subsection{Results for the Noleland cluster} \label{sec:infiniband}

\noskipheading{Ping-pong.} The ping-pong performance of the 
unencrypted conventional MPI (\Unencrypted), the naive approach (\Naive), 
and CryptMPI is shown in  \figref{fig:pingpong_infiniband}. 
Both {\Naive} and CryptMPI start well below the performance of the
unencrypted baseline. 
For example, for 64KB messages, the overhead of the naive approach, 
compared to the baseline, is  $202.1\%$, 
whereas CryptMPI uses $\min\{T_0-T_1, t\} = 2$ threads
and $k = 1$ chunk, 
ending up with $187.4\%$ overhead. 
As the message size becomes larger, 
the performance of $\Naive$ becomes saturated very quickly,
widening its gap with the baseline. 
In contrast, the performance of CryptMPI increases sharply, 
catching up with the baseline. 
For example, for 4MB messages, the overhead of the $\Naive$ approach is $412.4\%$, 
whereas CryptMPI uses $\min\{T_0-T_1, t\} = 8$ threads and $k = 8$ chunks, 
resulting in just $13.3\%$ overhead.
The phenomenon above can be explained as follows.

\begin{figure}[!tbp]
	\centering
		\includegraphics[width=0.45\textwidth]{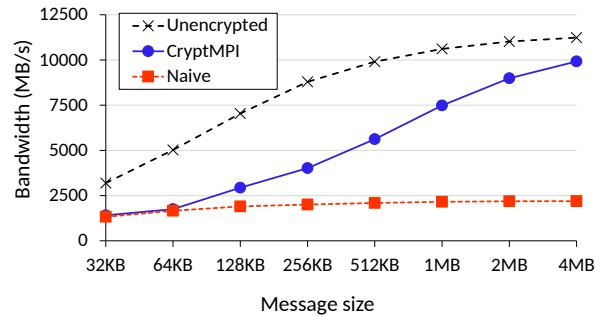}
		\vspace{-1.5ex}
	\caption{Average 
	ping-pong throughput on Noleland.}
	\label{fig:pingpong_infiniband}
	\vspace{-1ex}
\end{figure}

\begin{itemize}
	\item The running time of the naive approach consists of (i)~the (one-thread) encryption-decryption time,
	and (ii) the underlying MPI communication time,
	which is approximately the same as the time of the unencrypted baseline.
	For 4MB messages, the single-thread encryption-decryption throughput (2,759.27 MB/s,  \allowbreak roughly half of the encryption throughput shown
        in \figref{fig:enc_dec_gcc_throughput}) is about
	24\% of the unencrypted ping-pong throughput (11,235.13 MB/s).
	The ping-pong performance of $\Naive$
	can be estimated as
        $\frac{1 + 0.24}{0.24} \approx 5.16$ times slower than
        that of the unencrypted baseline,
	which is consistent with the reported $412.4\%$ overhead above.
	
      \item For messages under 1MB, CryptMPI's running time consists
        of (i) the time of multi-threading
	encryption-decryption, with $t \in \{2, 4, 8\}$ threads,
	and (ii) the underlying MPI communication time.
	For $64$KB messages, $\min\{T_0-T_1, t\} = 2$
	and thus the encryption-decryption throughput (3,151.68 MB/s) is about 62\% of the unencrypted ping-pong throughput  (5,023.81 MB/s).
	The ping-pong performance with CryptMPI can be estimated as
	$\frac{1 + 0.62}{0.62} \approx 2.61$ times slower than
        that of the unencrypted baseline,
	which is close to the reported $187.4\%$ overhead above.
	
      \item For messages of at least 1MB,  the model
        in \secref{sec:oursystem} suggests that 
	the total timing of CryptMPI is roughly  the time of unencrypted Ping Pong 
	plus the encryption-decryption time (under 8 threads) of a \emph{single} 512KB chunk. 
  The 8-thread  encryption-decryption throughput for 512KB (10,736.98 MB/s) is 0.96 times is the unencrypted ping-pong throughput for 4MB (11,235.13 MB/s).
	As a 512KB chunk is only $1/8$ the size of a 4MB message, 
	estimatedly, the ping-pong performance under CryptMPI
	is $\frac{1 + 0.96 \cdot 8}{0.96 \cdot 8} \approx 1.13$ times slower than that of the unencrypted baseline,
	which is consistent with the reported $13.3\%$ overhead above.
\end{itemize}

\noskipheading{OSU Multiple-Pair Bandwidth.}
OSU Multiple-Pair benchmark measures the throughput between two nodes where there are multiple
concurrent flows. 
\figref{fig:multipair_infiniband_64KB} shows the Multiple-Pair
performance of $\Unencrypted$, CryptMPI, and $\Naive$ with 1, 2, 4, 8, and 16 pairs of
communication, and two message sizes, 64KB and 4MB.

When there is only a single pair, we have a similar phenomenon as in
the ping-pong experiment.
If we keep adding more and more pairs, the performance of both CryptMPI and $\Naive$ will increase, 
until the aggregate throughput of all pairs exceeds the bandwidth of the link between the two nodes. 
In that case, the bottleneck is the link,
  and  encryption overhead in this situation is effectively
  hidden. 

In particular, when there are two pairs, $\sysrm$ can match the performance of the baseline,
but $\Naive$ still falls far behind.
For example, for 4MB message,
compared to the baseline, $\sysrm$ has $0.31\%$ overhead,
whereas $\Naive$ has $34.87\%$ overhead.
For four or more pairs, even the naive approach
has about the same performance as the baseline.
This concurs with the observation of Naser et al. that in practical
scenarios where there are multiple
concurrent flows, the overhead of encrypted MPI may be
negligible \cite{Cluster:Naser19}.
Notice that $\Naive$ cannot fully utilize the network bandwidth when
there is one or two pairs regardless of the message size; 
in contrast CryptMPI achieves similar performance as $\Unencrypted$ even with
just a single pair of communication when the message size is sufficiently large. 

\begin{figure}[!htbp]	
\centering
\subfloat[$64$KB-messages]{
  \includegraphics[width=0.4\textwidth]{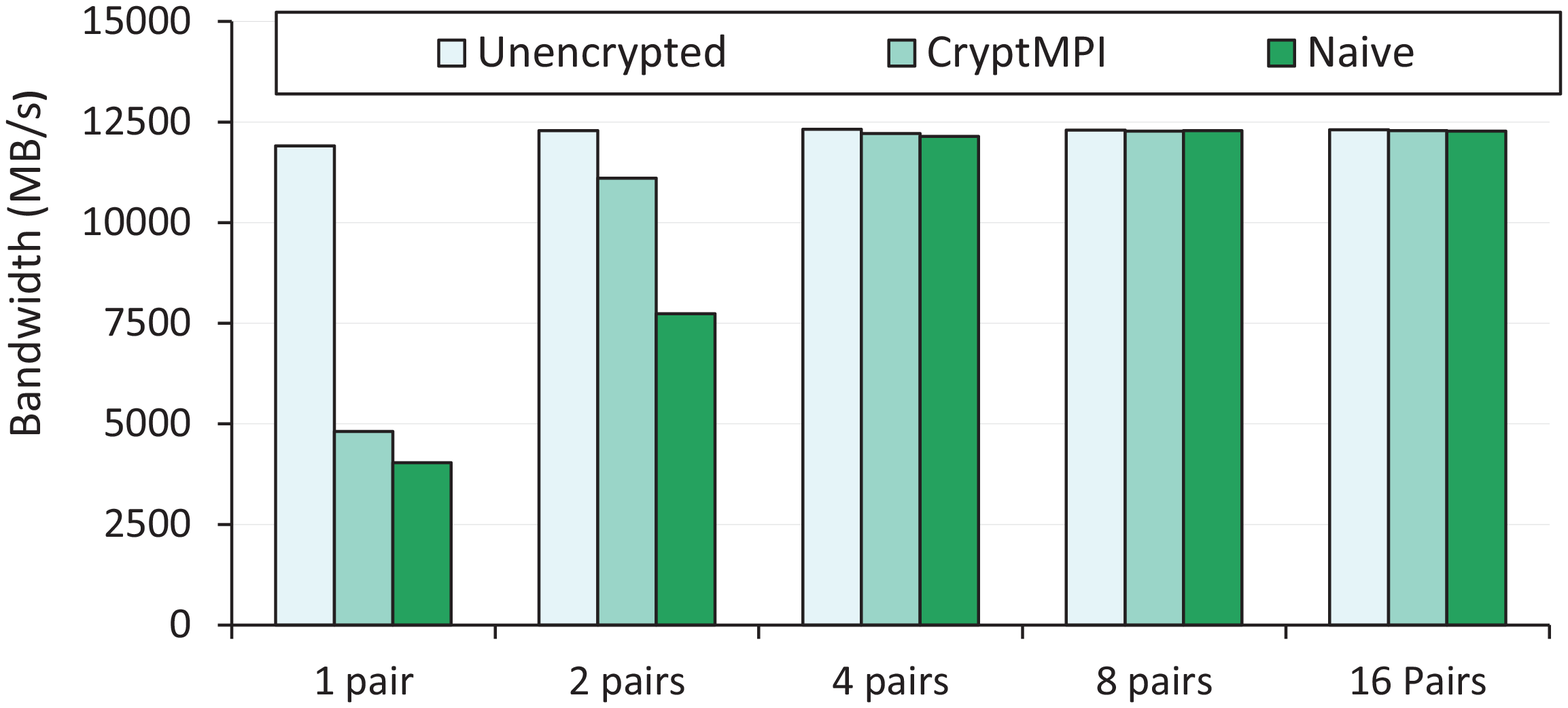}%
}
\\
\subfloat[$4$MB-messages]{
  \includegraphics[width=0.4\textwidth]{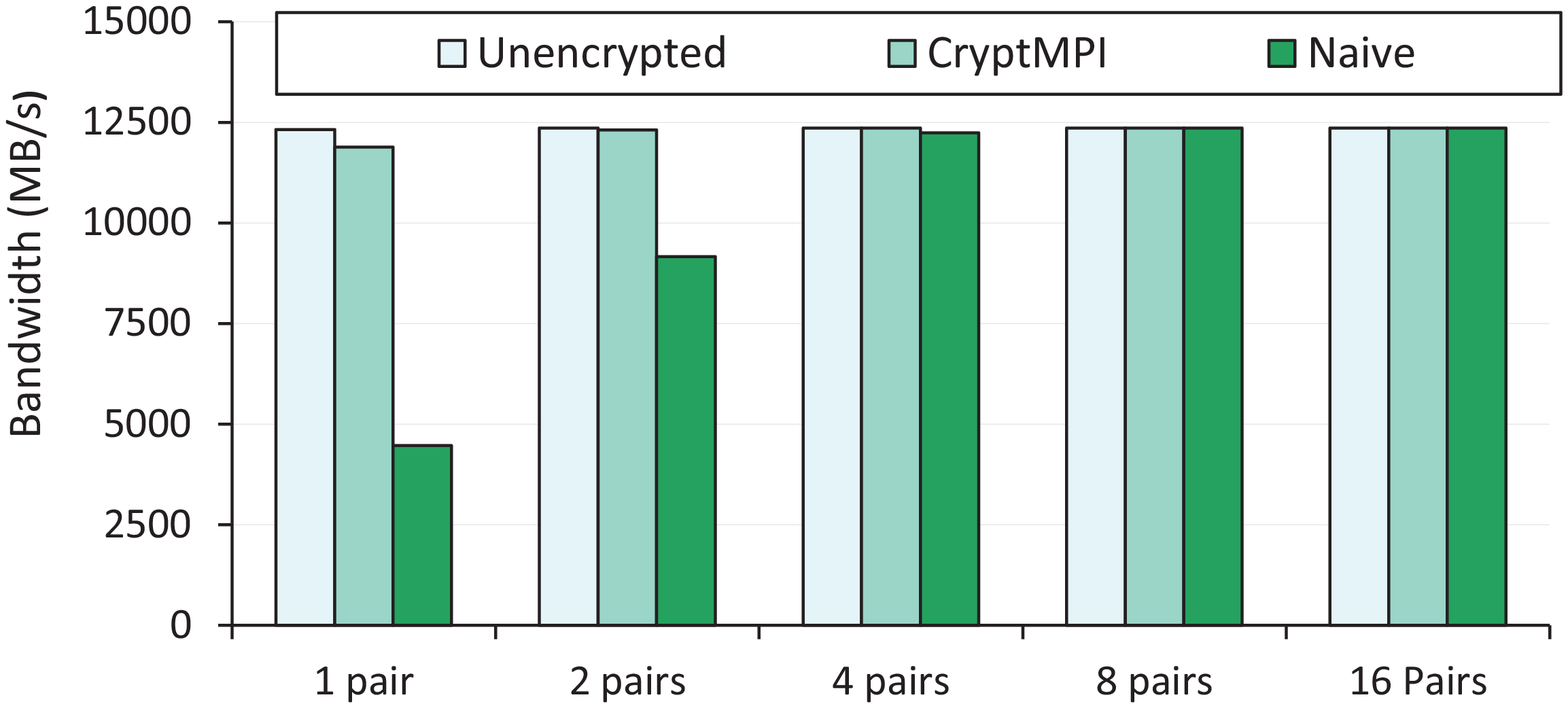}%
}
\vspace{-1.5ex}
\captionsetup{singlelinecheck=false}
\caption{OSU Multiple-Pair throughput on Noleland. }
\label{fig:multipair_infiniband_64KB}
\vspace{-1.5ex}
\end{figure}

CryptMPI is carefully designed to avoid performance degradation. 
For example, consider the setting of two pairs of communication for 4MB messages, 
namely $t = 8$.  
Recall that in the OSU Multiple-Pair Bandwidth test, in each iteration, 
a sender will send 64 messages. 
Initially, CryptMPI  picks $\min\{T_0 - T_1, t\} = 8$ threads and $k = 8$ chunks for each message.
However, after the 8th messages, there are already $64$ pending send requests, 
and CryptMPI will reset $k = 1$ for subsequent messages. 
When there are many pairs, CryptMPI automatically adjusts the thread number to avoid an over-subscription of cores. 
For example, if there are $8$ pairs, since there are $32$ hyper-threads in a Noleland node, 
$T_0 = 32 / 8 = 4$, and thus CryptMPI will use $\min\{T_0 - T_1, t\} = 2$ threads. 



\heading{Stencil and NAS benchmarks.} Results for Stencil and NAS benchmarks on Noleland have similar trends as
those on PSC Bridges. We omit them due to a lack of space. 

\subsection{Results for PSC Bridges} \label{sec:stencil}

\noskipheading{Ping Pong.}
The ping-pong results on Bridges
are shown in \figref{fig:xsede_pingpong}. Because of
the low
of computation power, the naive approach has very high overhead on Bridges.
For instance, on 4MB
naive approach's overhead is 754.87\% compared to the unencrypted base, 
whereas CryptMPI's overhead is just  38.12\%.
We can explain this performance using a similar reasoning as the analysis in Noleland: 

\begin{itemize}
\item For messages under 1MB, CryptMPI's running time consists of (i) the time of multi-threading
    encryption-decryption, with $t \in \{4, 8, 16\}$ threads,
    and (ii) the underlying MPI communication time.
    For $64$KB messages, $\min\{T_0-T_1, t\} = 4$ 
    and thus the encryption-decryption throughput (2,786.08 MB/s) is about 67\% of the unencrypted ping-pong throughput  (4,105.55 MB/s).
    The ping-pong performance under CryptMPI can be estimated as 
    $\frac{1 + 0.67}{0.67} \approx 2.50$ times slower than that of
    the unencrypted baseline,
    which is consistent to the reported $140.2\%$ overhead above.

    \item The 16-thread  encryption-decryption throughput for 512KB (8,091.82 MB/s) is 0.71 times of the unencrypted ping-pong throughput for 4MB (11,404.1 MB/s).
As a 512KB chunk is only $1/8$ the size of a 4MB message,
the ping-pong performance under CryptMPI can be estimated as 
$\frac{1 + 0.71 \cdot 8}{0.71 \cdot 8} \approx 1.18$ times slower than that of the unencrypted baseline,
which is close to the reported $38.12\%$ overhead above.
\end{itemize}

\begin{figure}[!tp]
	\centering
	\vspace{-2ex}
		\includegraphics[width=0.4\textwidth]{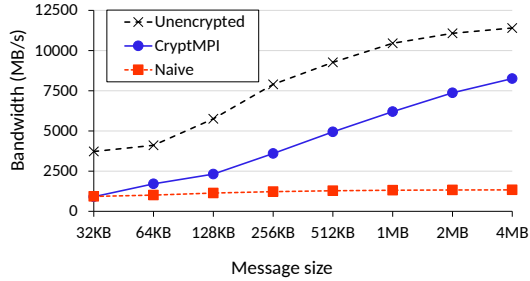}
		\vspace{-1.5ex}
	\caption{Average 
	ping-pong throughput on PSC Bridges.}
	\label{fig:xsede_pingpong}
\end{figure}

\heading{OSU Multiple-Pair bandwidth.}
The multi-pair results on  PSC Bridges 
are shown in
\figref{fig:xsede_multipair_infiniband}.
The trend is similar to
its counterpart in the Noleland cluster with InfiniBand (namely \figref{fig:multipair_infiniband_64KB}):
all libraries have similar performance when there are 
enough pairs of communication.
When the number of concurrent pairs is small,
even for large messages, $\Naive$ has very high overheads. For instance, 
for two pairs and 4MB messages, the overhead for $\Naive$  is 178.48\%, 
whereas CryptMPI's overhead is 5.03\%. 

\begin{figure}[t]	
	\centering
	\vspace{-2.5ex}
	\subfloat[$64$KB-messages]{
	  \includegraphics[width=0.4\textwidth]{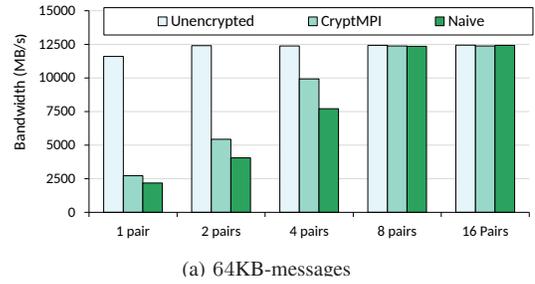}%
	}
	\\
	\vspace{-2.75ex}
	\subfloat[$4$MB-messages]{
	  \includegraphics[width=0.4\textwidth]{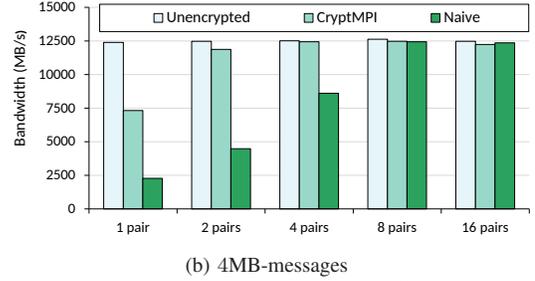}%
	}
	\vspace{-1.5ex}
	\captionsetup{singlelinecheck=false}
	\caption{OSU Multiple-Pair throughput on PSC Bridges. }
	\label{fig:xsede_multipair_infiniband}
	\vspace{-1.5ex}
	\end{figure}

\begin{table*}[!tbp]
		
		\centering
		\caption{Average inter-node communication time ($\Ti$), total communication time ($\Tt$),  and total execution time ($\Te$) in seconds, for NAS Parallel Benchmarks, Class D, 784-rank and 112-node on PSC Bridges. For CG, we used 512-rank and 128-node.}
		\begin{tabular}{p{0.13\linewidth}*{4}{|p{0.05\linewidth}p{0.05\linewidth}p{0.05\linewidth}}}
		\toprule[1.25pt]
		& \multicolumn{3}{c|}{\textbf{CG}} & \multicolumn{3}{c|}{\textbf{LU}} & \multicolumn{3}{c|}{\textbf{SP}} & \multicolumn{3}{c}{\textbf{BT}}  \\
		& $\Ti$ & $\Tt$ & $\Te$ & $\Ti$ & $\Tt$ & $\Te$ & $\Ti$ & $\Tt$ & $\Te$ & $\Ti$ & $\Tt$ & $\Te$ \\ \midrule
		\textbf{Unencrypted} & 7.77 &  13.17 & 27.09 & 6.30  & 32.11 & 52.17 & 24.67 & 35.94 & 62.40 & 26.49 & 40.54 & 69.89 \\
		\textbf{CryptMPI}  & 8.73 &  14.80 & 32.56 & 8.80  & 34.26 & 56.24 & 30.01 & 39.54 & 67.64 & 28.54 & 42.25 & 73.06 \\
		\textbf{Naive}  & 13.91 &  20.93 & 37.84 & 9.71  & 36.06 & 57.87 & 31.62 & 40.11 & 68.18 & 29.03 & 42.67 & 73.49 \\ 
		\bottomrule[1.25pt]
		\end{tabular}
		\label{tab:NAS_PSC_INTER_NODE}
		
\end{table*}

\heading{Stencil Kernels.}
We ran 784-rank Stencil kernels on 112 nodes. 
The results for 2D stencil with different computation loads
are illustrated in~\figref{fig:2d_stencil}. 
When the computational load is not heavy then CryptMPI
significantly improves the performance of the naive  approach. 
For example, for 60\% load and 2MB messages, 
the encryption overhead in CryptMPI is 206\%, 
whereas that of the naive approach is 331\%. 
In this setting CryptMPI uses $\min\{T_0-T_1, t\} = 2$ threads and $k = 4$ chunks. 
Even when computational load is heavy---an unfavorable condition for CryptMPI, 
the improvement of CryptMPI over the naive approach is still noticeable. 
For example, under 80\% computational load and 256KB messages, CryptMPI's encryption overhead is 384\%, 
whereas that of the naive approach is 450\%. 
In this setting CryptMPI instead uses $\min\{T_0-T_1, t\} = 2$ threads and $k = 1$ chunk.



\begin{figure}[!tbp]	
	\centering
	\vspace{-1.5ex}
	\subfloat[$256$KB-messages]{
	  \includegraphics[width=0.4\textwidth]{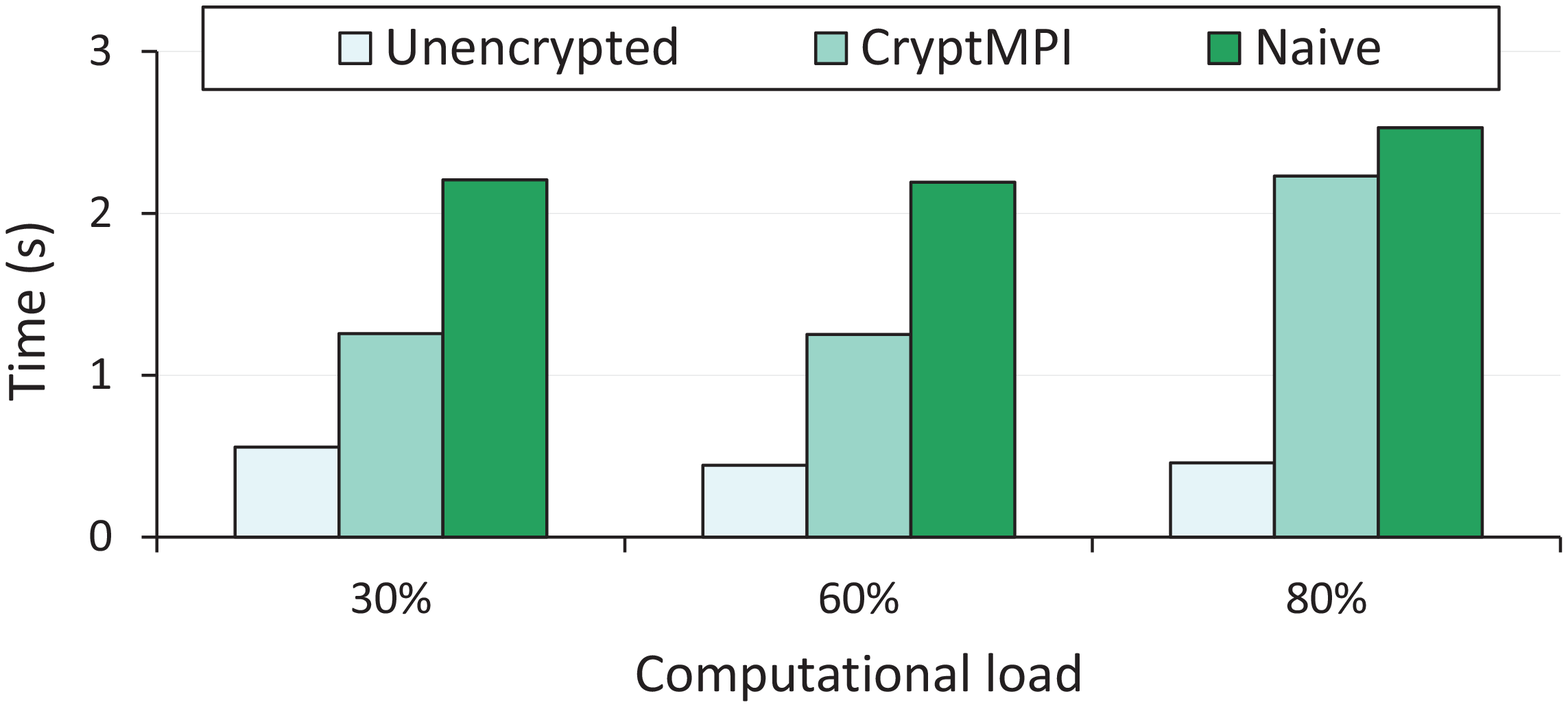}%
	}
    \vspace{-2.5ex}
	\subfloat[$2$MB-messages]{
	  \includegraphics[width=0.4\textwidth]{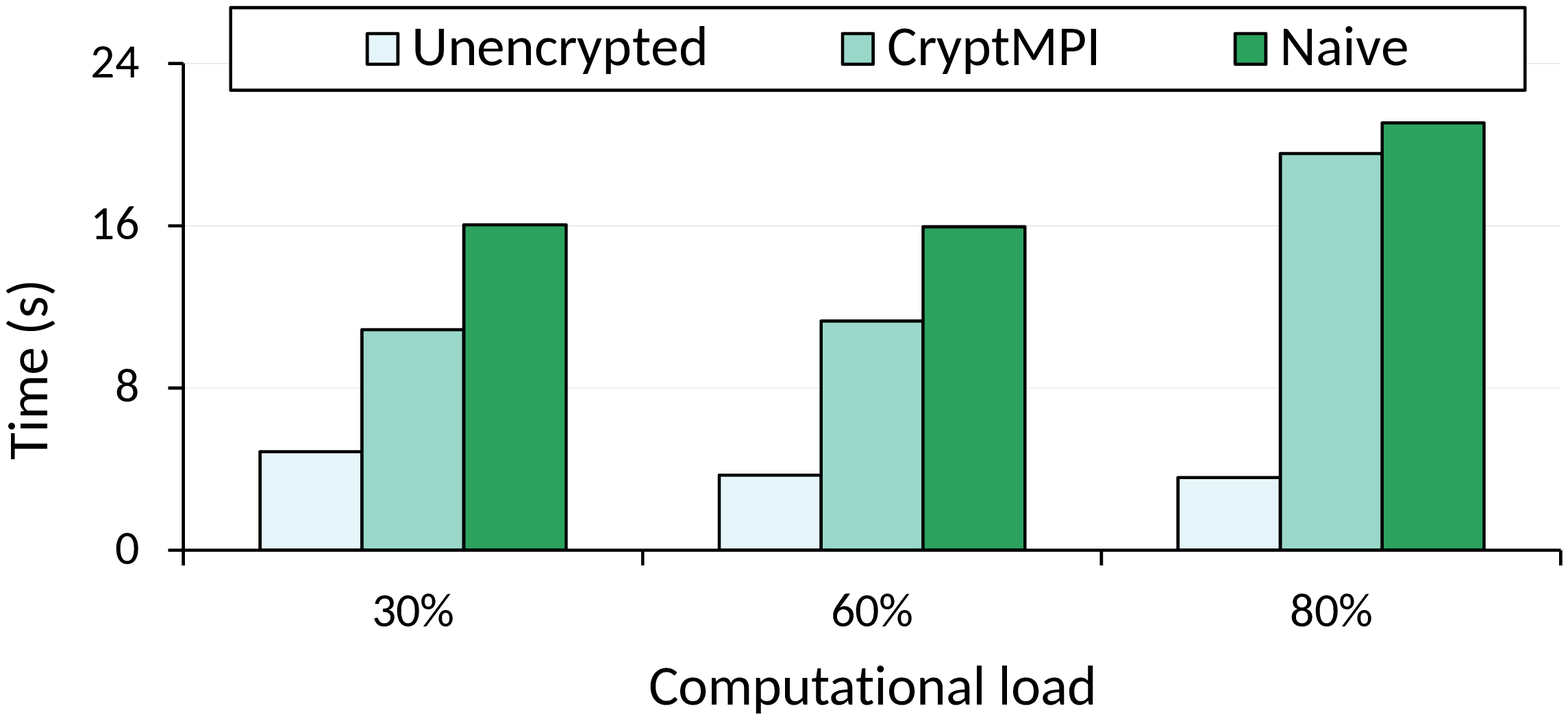}%
	}
	\vspace{-1.5ex}
	\captionsetup{singlelinecheck=false}
	\caption{2D Stencil communication time, for 784-rank and 112-node, PSC Bridges.}
	\label{fig:2d_stencil}
	\vspace{-1.5ex}
	\end{figure}

\heading{NAS benchmarks.} 
The running time of NAS benchmarks on PSC Bridges is shown in Table~\ref{tab:NAS_PSC_INTER_NODE}.
We ran CG with 512 ranks and 128 nodes because CG can only run in the setting where the number of ranks is a power of two. 
The other programs were run with 784 ranks and 112 nodes. The total execution time
overhead of CG is $20.2\%$ and $39.69\%$ for $\sysrm$ and $\Naive$ respectively. 
However, if we take into account only the inter-node communication time
then $\sysrm$'s overhead is only $12.3\%$, 
whereas $\Naive$'s overhead is $79\%$. In BT, 
both $\sysrm$ and $\Naive$ have relatively low overhead for total execution time: $4.53\%$ for CryptMPI 
and $5.15\%$ for $\Naive$. 
The reason for this phenomenon is as follows: 

\begin{itemize}
\item Recall that CryptMPI only improves the naive approach
  for messages of at least 64KB, and thus there might be no noticeable
  difference between the two libraries
  for applications that have largely small messages. 

\item CryptMPI's approach of using multi-thread encryption is only
  beneficial if there are still some idle computational resources to
  exploit, which may not hold for some computationally intensive situations.

\item Finally, CryptMPI's pipelining will give a diminishing return if
  there is already a significant overlapping between computation and
  communication in the application. Such a  situation 
	hides the communication overheads,  and thus both $\Naive$ and CryptMPI should have 
  low overheads. This happens in the BT benchmark. 
\end{itemize}

\section{Conclusion}
\label{sec:conclu}

Achieving high performance for encrypted MPI communication is challenging
since the encryption and decryption operations are the bottlenecks of
encrypted communication in modern platforms. In this work, we present
$\sysrm$, a fast encrypted MPI library. By incorporating
various optimization techniques for large messages,
$\sysrm$ significantly improves
the performance of the naive approach to support MPI communication
with privacy and integrity. The evaluation on two platforms indicates
that the proposed techniques are more effective when the
network speed is significantly higher than single-core encryption
performance, a likely scenario in the future.

A limitation of this work is to consider only one-to-one communication. 
We are currently developing techniques to deal with collective operations
in $\sysrm$.

\bibliographystyle{unsrt}
\bibliography{./crypt}

\begin{thebibliography}{10}

\bibitem{Cluster:Naser19}
Abu Naser, Mohsen Gavahi, Cong Wu, Viet~Tung Hoang, Zhi Wang, and Xin Yuan.
\newblock An empirical study of cryptographic libraries for mpi communications.
\newblock In {\em 2019 IEEE International Conference on Cluster Computing
  (CLUSTER)}, pages 1--11. IEEE, 2019.

\bibitem{Ruan:2012:EMP:2197079.2197242}
Xiaojun Ruan, Qing Yang, Mohammed~I. Alghamdi, Shu Yin, and Xiao Qin.
\newblock {ES-MPICH2}: A {M}essage {P}assing {I}nterface with enhanced
  security.
\newblock {\em IEEE Trans. Dependable Secur. Comput.}, 9(3):361--374, May 2012.

\bibitem{Maffina2013}
M.~A. {Maffina} and R.~S. {RamPriya}.
\newblock An improved and efficient message passing interface for secure
  communication on distributed clusters.
\newblock In {\em 2013 International Conference on Recent Trends in Information
  Technology (ICRTIT 2013)}, pages 329--334, July 2013.

\bibitem{RTSJ16}
V.~{Rekhate}, A.~{Tale}, N.~{Sambhus}, and A.~{Joshi}.
\newblock Secure and efficient message passing in distributed systems using
  one-time pad.
\newblock In {\em 2016 International Conference on Computing, Analytics and
  Security Trends (CAST 2016)}, pages 393--397, Dec 2016.

\bibitem{Shivaramakrishnan2014}
S.~{Shivaramakrishnan} and S.~D. {Babar}.
\newblock Rolling curve {ECC} for centralized key management system used in
  {ECC-MPICH2}.
\newblock In {\em 2014 IEEE Global Conference on Wireless Computing Networking
  (GCWCN 2014)}, pages 169--173, Dec 2014.

\bibitem{Book:KL14}
Jonathan Katz and Yehuda Lindell.
\newblock {\em Introduction to modern cryptography}.
\newblock CRC press, 2014.

\bibitem{summit}
{Supercomputer Summit}.
\newblock \url{https://docs.olcf.ornl.gov/systems/summit_user_guide.html},
  2018.

\bibitem{amdryzen}
{AMD Ryzen Threadripper 3990X}.
\newblock
  \url{https://www.amd.com/en/products/cpu/amd-ryzen-threadripper-3990x}, 2020.

\bibitem{C:HB01}
Jee~Hea An and Mihir Bellare.
\newblock Does encryption with redundancy provide authenticity?
\newblock In {\em The 20th International Conference on the Theory and
  Applications of Cryptographic Techniques (EUROCRYPT 2001)}, pages 512--528.
  Springer, 2001.

\bibitem{NIST:38A}
Morris~J. Dworkin.
\newblock {NIST SP} 800-38{A}. {R}ecommendation for block cipher modes of
  operation: {M}ethods and techniques.
\newblock 2001.

\bibitem{NIST:38D}
Morris~J. Dworkin.
\newblock {NIST SP} 800-38{D}. {R}ecommendation for block cipher modes of
  operation: {G}alois/{C}ounter {M}ode ({GCM}) and {GMAC}.
\newblock 2007.

\bibitem{McGrew:2004:SPG:2158177.2158215}
David~A. McGrew and John Viega.
\newblock The security and performance of the {G}alois/{C}ounter {M}ode ({GCM})
  of operation.
\newblock In {\em Proceedings of the 5th International Conference on Cryptology
  in India (INDOCRYPT 2004)}, INDOCRYPT'04, pages 343--355, Berlin, Heidelberg,
  2004. Springer-Verlag.

\bibitem{RSA}
Ronald~L Rivest, Adi Shamir, and Leonard Adleman.
\newblock A method for obtaining digital signatures and public-key
  cryptosystems.
\newblock {\em Communications of the ACM}, 21(2):120--126, 1978.

\bibitem{OAEP}
Mihir Bellare and Phillip Rogaway.
\newblock Optimal asymmetric encryption --- how to encrypt with rsa.
\newblock In {\em The 13th Annual International Conference on the Theory and
  Applications of Cryptographic Techniques (EUROCRYPT 1994)}, 1994.

\bibitem{boringssl}
{BoringSSL}.
\newblock \url{https://boringssl.googlesource.com/boringssl}, 2018.

\bibitem{C:HRRV15}
Viet Tung~Hoang, Reza Reyhanitabar, Phillip Rogaway, and Damian Vizar.
\newblock Online authenticated-encryption and its nonce-reuse
  misuse-resistance.
\newblock In {\em Annual Cryptology Conference (CRYPTO 2015)}, volume 9215, 08
  2015.

\bibitem{tink}
{Tink}.
\newblock \url{https://github.com/google/tink}, 2019.

\bibitem{CCS:HS20}
Viet~Tung Hoang and Yaobin Shen.
\newblock Security of streaming encryption in {G}oogle's {T}ink library.
\newblock In {\em CCS 2020}, 2020.

\bibitem{Hockney94}
Roger~W. Hockney.
\newblock The communication challenge for mpp: Intel paragon and meiko cs-2.
\newblock {\em Parallel Comput.}, 20(3):389--398, March 1994.

\bibitem{gropp2016modeling}
William Gropp, Luke~N Olson, and Philipp Samfass.
\newblock Modeling mpi communication performance on smp nodes: Is it time to
  retire the ping pong test.
\newblock In {\em Proceedings of the 23rd European MPI Users' Group Meeting},
  pages 41--50. ACM, 2016.

\bibitem{XSEDE}
{PSC Bridges}.
\newblock \url{https://www.psc.edu/bridges}, 1986.

\bibitem{OSUBM}
{OSU} {B}enchmarks.
\newblock \url{http://mvapich.cse.ohio-state.edu/benchmarks/}, 2019.

\bibitem{Bailey:1991:NPB:2748645.2748648}
D.H. Bailey, E.~Barszcz, J.T. Barton, D.S. Browning, R.L. Carter, L.~Dagum,
  R.A. Fatoohi, P.O. Frederickson, T.A. Lasinski, R.S. Schreiber, H.D. Simon,
  V.~Venkatakrishnan, and S.K. Weeratunga.
\newblock The {NAS} {P}arallel {B}enchmarks.
\newblock {\em Int. J. High Perform. Comput. Appl.}, 5(3):63--73, September
  1991.

\end{thebibliography}

\end{document}